\newtheorem{proposition}{proposition}
\newtheorem{remark}{remark}
\newtheorem{lemma}{lemma}
\begin{document}

\title{Hybrid Random Concentrated Optimization Without Convexity Assumption}

\author{Pierre Bertrand$^1$, Michel Broniatowski$^2$, Wolfgang Stummer$^3$ \\ 
$^{1}$ AMSE, Aix-Marseille University, Marseille; $^{2}$LPSM, Sorbonne University, Paris; 
\\$^{3}$ Maths Department, University of Erlangen-Nürnberg (FAU), Erlangen}
\maketitle

\begin{abstract}
We propose a new random method to minimize deterministic continuous functions over subsets $\mathcal{S}$ of high-dimensional space $\mathbb{R}^K$ without assuming convexity. Our procedure alternates between a Global Search (GS) regime to identify candidates and a Concentrated Search (CS) regime to improve an eligible candidate in the constraint set $\mathcal{S}$. Beyond the alternation between those completely different regimes, the originality of our approach lies in leveraging high dimensionality. We demonstrate rigorous concentration properties under the $CS$ regime. In parallel, we also show that $GS$ reaches any point in $\mathcal{S}$ in finite time. Finally, we demonstrate the relevance of our new method by giving two concrete applications. The first deals with the reduction of the $\ell_{1}-$norm of a LASSO solution. Secondly, we compress a neural network by pruning weights while maintaining performance; our approach achieves significant weight reduction with minimal performance loss, offering an effective solution for network optimization.
\end{abstract}


\section{Introduction}

In high-dimensional constrained optimization, finding the minimum of a function without strong assumptions such as convexity remains a major challenge. Many classical methods, including gradient-based approaches, rely on smoothness and convexity properties, which are not always satisfied in currently omnipresent real-world applications, particularly in machine learning and neural network compression.
As far as the latter is concerned, it is well known that optimally accelerating network inference is an increasingly pressing issue due to the growing complexity of networks and their reuse as terminal networks. Various methods exist to speed up an existing network, such as training multiple networks (see e.g. \cite{openai2023gpt4_complete}, \cite{openai2023gpt4_complete}) or distillation (see e.g. \cite{sanh2020distilbert_complete}); 
for general surveys on neural network compression, the reader is e.g. referred to the comprehensive works of \cite{Choudhary_2020}, \cite{neill2020overview_complete}, 
\cite{Hoefler_2021}, \cite{MARINO2023152_new}, \cite{Vysogorets_2023}, as well as \cite{Bhalgaonkar_2024}. 

Another omnipresent issue in machine learning, artificial intelligence and its adjacent research fields is the strive for model sparsity,
which can e.g. be tackled by optimization techniques such as the widely used $\ell_{1}-$norm-minimizing LASSO method (cf.~\cite{Tibshirani_1996}; see also e.g.~\cite{Chen_2022}, for a broader embedding).

Of course, there is a huge amount of other tasks in machine learning which are modeled as optimization problems, often in high-dimensional spaces
(for a recent, very comprehensive and detailed overview, see e.g. the book of \cite{bayesian_perspective_2nd}, \cite{bayesian_perspective_2nd}).

\vspace{0.3cm}
\noindent
Inspired by the above-described issues, 
we develop a new \textit{general} method for finding the minimum of a function $f$ on a constraint set 
(search space) $\mathcal{S} \subset \mathbb{R}^{K}$ (for instance, $\mathcal{S}$ may be connected with the pruning of the $K$ weights of a neural network). More detailed, our \textit{random} method has two regimes: the first one is a Global Search (GS) and the second one is a Concentrated Search (CS). 
The GS phase enables broad exploration of the search space $\mathcal{S}$, ensuring that the algorithm does not get trapped in local minimizers. The CS phase refines promising candidates by exploiting local structure, improving convergence towards an optimal or near-optimal solution. 
The originality of our approach lies in leveraging high dimensionality: we demonstrate that CS benefits from concentration properties, while GS guarantees reaching any point in finite time.

\vspace{0.3cm}
\noindent
In Section~\ref{sec:algo} we first introduce in detail the two above-mentioned regimes GS and CS, and discuss their alternating interplay as well as their algorithmic manifestation. Afterwards, in Section~\ref{sec:properties}, we derive the underlying rigorous properties. In both regimes, the new point is located within a ball whose radius depends on the norm of the center. Proposition~\ref{prop:parcours} states that, together with some additional control, the entire set $\mathcal{S}$ is reachable by GS through this generation scheme. Furthermore, concerning CS we prove in Lemma~\ref{lem:voisinage} that its realizations are highly concentrated within an arbitrarily small neighborhood of the current point. Moreover, in Proposition~\ref{prop:concentration}, we demonstrate that this neighborhood is well covered. As an interesting side effect, we discover that the higher the dimension $K$ is, the more effective our properties become.

\noindent
To illustrate the effectiveness of our proposed new method, we present two corresponding applications.
First, in Section \ref{sec:lasso} we deal with the above-mentioned LASSO context, and reduce the ``sparsity-quantifying'' $\ell_{1}-$norm of the LASSO minimizer while keeping ``almost-equal'' performance quality (the latter being manifested through an appropriate construction of the constraint set $\mathcal{S}$). These investigations also provide insight into the alternating interplay between the GS and CS regimes.
Our second application concerns with the above-mentioned topic of neural network compression: for this, we define the function $f$ in connection with the pruning percentage at a given threshold, and specify the constraint set $\mathcal{S}$  in connection with the performance score. 
Accordingly, our method takes as input a weights-vector provided by an initial training and optimizes its pruning rate. 
The main challenge remains maintaining performance (see e.g. Table 1 in \cite{MARINO2023152_new}). Our method serves as a complementary approach in this context and has the advantage of providing convergence guarantees.  

\vspace{0.2cm}
\noindent
Both applications highlight the versatility and potential of our method for large-scale optimization problems. 
While some random search approaches are known to be inefficient in some contexts, 
we argue that --- when properly integrated with concentrated search --- random search can become a powerful tool for escaping local optima 
and for ``navigating'' on complicated landscapes. Moreover, our results suggest that weak regularity assumptions suffice to achieve meaningful improvements over some traditional methods.


\section{Description of the method\label{sec:algo}}

In this section, we explain in detail our new hybrid random concentrated optimization approach.
For this and the rest of the paper, we denote by $x^k$ the $k$-th component of a vector 
$x\in \mathbb{R}^K$.


\subsection{Objective and principle structure\label{ssec:structure}}

Our objective is to solve the constrained minimization problem
\begin{equation} 
\min_{s \in \mathcal{S}} \, 
f(s)
\label{eq:generaloptimization}
\end{equation}
where $f$ is a function on a set $\mathcal{S} \in \mathbb{R}^K$ of high-dimensional vectors (i.e. $K$ is very large);
notice that we tackle this problem ``directly'', i.e. no regularization (or similar) method is involved. 
Throughout this paper, we only need two ``light'' assumptions, namely
that $f$ is continuous (but not necessarily differentiable) and that the constraint set (search space) $\mathcal{S}$ is an open connected
set (and therefore a path-connected set). In particular, $f$ is allowed to be non-convex and the same holds for $\mathcal{S}$ (for the latter, see e.g. Remark \ref{rem:convexity} for a concrete situation in the context of neural network compression). 
This lack of convexity properties leads us to a $K-$dimensional random walk (on \textit{continuous} space) approach rather than a gradient descent approach. 
For this, we iteratively generate realizations/simulations of random variables traversing the
space $\mathbb{R}^K$, control their belonging to $\mathcal{S}$ and keep the outcome with the minimal $f$.
In doing so, the \textit{fundamental structure} of our new hybrid random optimization method consists of three major components:
(i) a Global Search (GS) regime, (ii) a Concentrated Search (CS) regime, and (iii) a 
scheme of switching between these two regimes.
Basically, the role of GS is ``vast'' candidate-solution-exploration of the entire space 
whereas the role of CS amounts to a ``locally refined'' candidate-solution-search which effectively takes into
account high-dimensional concentration properties of the underlying random simulations. 

\vspace{0.2cm}
\noindent
For further verbal explanations of the underlying main structure of our new approach, 
let us call --- as common --- any element in $\mathcal{S}$ 
a \textit{candidate solution}.
For our approach, we start with a candidate solution $s_{0} \in \mathcal{S}$,
and also set the ``control-variable'' $c_{0}:= s_{0}$. 
At the beginning of the iterative step $m \rightarrow m+1$, we
have two components at hand: the best-so-far (incumbent) candidate solution $s_{m} \in \mathcal{S}$ 
(which is the candidate solution obtained throughout the first $m$ steps which has the lowest function value) 
and a ``control-variable'' value $c_{m} \in \mathcal{S}$. 
In case that the iterative step belongs to the GS regime,
we simulate a random normal (Gaussian) vector $a_{m}$ with mean $c_{m}$ and a 
covariance matrix which also depends
on $c_{m}$. Afterwards, $a_{m}$ is deterministically transformed 
--- through a ``cut-off-filter'' ---
to the vector $\alpha_{m}$ which satisfies componentwise
\begin{equation}
|\alpha_{m}^{k}-c_{m}^{k}|\leq C\ast |c_{m}^{k}|  
\quad \text{for all $k=1,\ldots, K$}
\label{eq:control_c}
\end{equation}
for some constant $C>0$ (and hence, the $\ell_{1}-$distance/norm relationship 
$||\alpha_{m}-c_{m}||_{1} \leq C \ast ||c_{m}||_{1}$ holds);
then, from $\alpha_{m}$ the new updated vectors $s_{m+1}$ and $c_{m+1}$ are 
computed in a certain GS-dependent way.

In case that the iterative step belongs to the CS regime,
we simulate a random normal (Gaussian) vector $a_{m} = \alpha_m$ with mean $s_{m}$ and a 
covariance matrix which also depends
on $s_{m}$ and and which additionally takes into account the above-mentioned concentration properties. 
Eventually, from $\alpha_{m}$ the new updated vectors $s_{m+1}$ and $c_{m+1}$ are computed in a certain CS-dependent way.

\noindent
This iteration process stops when we have found candidate solutions for which
$f$ becomes sufficiently small or when the best-so-far function value $f$ no longer
decreases over an arbitrary number of steps.

\vspace{0.2cm}
\noindent
In the following, we explain the full details of the above-mentioned components.


\subsection{Global Search (GS) Regime\label{ssec:GS}}

Recall that at the beginning of the iterative step $m \rightarrow m+1$, we
have $s_{m} \in \mathcal{S}$ and $c_{m} \in \mathcal{S}$ at hand.
As indicated above, the GS regime is used to explore the space 
when ``far from the
boundary of $\mathcal{S}$''. 
Accordingly, for each component $k \in \{1,\ldots, K\}$ we  
draw independently a single random variable $a_{m}^{k}$ which is normal (Gaussian)
with mean (center) $c_{m}^{k}$ and standard deviation $\sigma_{m}^{k} := \frac{|c_{m}^{k}|}{3}$, i.e.
\begin{equation*}
a_{m}^{k} := W^{k}\sim \mathcal{N}(c_{m}^{k},(\sigma _{m}^{k})^{2}) \, ;
\end{equation*}
this is deterministically transformed into
\begin{equation}
\alpha_{m}^{k} :=
\begin{cases}
a_{m}^{k}, & \text{if }|a_{m}^{k}-c_{m}^{k}|\leq C\ast |c_{m}^{k}| \, , \\ 
c_{m}^{k}, & \text{otherwise,}
\end{cases}
\label{fo:alphamk}
\end{equation}
where $C >2$ is a pregiven constant (cf. Proposition \ref{prop:parcours}, $C=2.1$ in our experiments).
Clearly, \eqref{fo:alphamk} implies \eqref{eq:control_c}.

Notice that for $c_{m}^{k} \ne 0$, the condition in the first line of \eqref{fo:alphamk} is equivalent
to $\Big|\frac{a_{m}^{k}}{c_{m}^{k}}-1\Big| \leq C$, and that $\frac{a_{m}^{k}}{c_{m}^{k}}$
expresses the (dis)similarity of $a_{m}^{k}$ from $c_{m}^{k}$.
This construction \eqref{fo:alphamk} ensures that the new vector $\alpha_{m} := (\alpha_{m}^{1}, \ldots, \alpha_{m}^{K})$  
remains relatively close to $c_{m}\in\mathcal{S}$ and thus ``increases the chances'' of being in $\mathcal{S}$.

\vspace{0.2cm}
\noindent
Moreover, the condition in the first line of \eqref{fo:alphamk}
is highly likely fulfilled for many $k$'s, since (with $\sigma_{m}^{k} = \frac{|c_{m}^{k}|}{3}$):
\begin{equation*}
\mathbb{P}\left( \Big|\frac{a_{m}^{k}}{c_{m}^{k}}-1\Big| \leq 1 \right) 
=\mathbb{P}\left( \Big|\frac{a_{m}^{k}-c_{m}^{k}}{\sigma_{m}^{k}}\Big| \leq 3 \right)
= 0.9973
\end{equation*}
according to the well-known $3\sigma-$rule of Gaussian distributions.
This choice of variance implies that the larger $|c_m^{k}|$ is, the more scattered is
$a_{m}^{k}$.
Therefore, if $c_{m} \in \mathcal{S}$ is very close to the boundary $\partial \mathcal{S}$, 
then $a_{m}$ may be in $\mathcal{S}$ and very close to $\partial \mathcal{S}$ with 
substantially small probability, since some of its components may fullfill the condition in the first line of \eqref{fo:alphamk} but may be nevertheless too large to keep the resulting vector $\alpha_{m}$ in $\mathcal{S}$.

In the current GS regime, the update of the control-variable value $c_{m}$ is constructed by
\begin{equation*}
c_{m+1} :=
\begin{cases}
\alpha_{m}, & \text{if }\alpha_m \in \mathcal{S} \, , \\ 
c_{m}, & \text{otherwise;}
\end{cases}
\end{equation*}
moreover, the best-so-far (incumbent) candidate solution $s_{m} \in \mathcal{S}$
is updated according to
\begin{equation*}
s_{m+1} :=
\begin{cases}
\alpha_{m}, & \text{if }\alpha_m \in \mathcal{S} \text{ and }f(\alpha_m) < f(s_m) \, , \\ 
s_{m}, & \text{otherwise.}
\end{cases}
\end{equation*}
Clearly, by construction we have $s_{m+1} \in \mathcal{S}$ and $c_{m+1} \in \mathcal{S}$.


\subsection{Concentrated Search (CS) Regime\label{ssec:CS}}

Recall that at the beginning of the iterative step $m \rightarrow m+1$, we
have $s_{m} \in \mathcal{S}$ and $c_{m} \in \mathcal{S}$ at hand. As indicated above, the CS regime aims at exploring \textit{small neighborhoods} of the best-so-far candidate solution $s_{m}$ by simulating a random vector $a_{m}$ with a control on the domain of  the \textit{high-dimensional} space $\mathbb{R}^{K}$ which bears positive mass; $a_{m}$ will indeed concentrate on such domain. The ingredients to be used for making such replicates are twofold:

\begin{itemize}
\item firstly, $a_{m}$ should concentrate sharply around $s_m$. A way is to adopt
the standpoint elaborated in \cite{Broniatowski_2023_new, Broniatowski_2024} which amounts to simulate
sequences $a_{m}$ $:=(a_{m,n})_{n\geq 1}$ which for large $n$ concentrate sharply and
rapidly around $s_m$. The integer $n$ can be considered as a tuning parameter. When specialized
to the Gaussian setting, we may adopt the same construction with the benefit that the concentration of $(\alpha_{m,n})_{n\geq 1}$ around $s_m$ can be calibrated in terms of the dimension $K$ for any given value of $n$, which anyhow should be chosen as a multiple of $K$.

\item secondly, the high-dimensionality should be taken into account in order to
provide some insight on the special domain in $\mathbb{R}^{K}$ which bears
the realizations of $a_{m,n}$ with lower-bounded probability for fixed $n$.
Such specific concentration features are well known for Gaussian vectors in
high-dimensional spaces; see e.g. \cite{Wegner_Book_2024}.
\end{itemize}

\noindent
Both requirements argue in favor of choosing normally (Gaussian) distributed vectors 
$a_{m,n}$. We formally construct an appropriate $a_{m}=a_{m,n}$ as follows.
To start with, we choose $n$ as a multiple of $K$ and consider $n$ as a ``free parameter''. Since the latter is large,
and as we will observe in the applications, $n = K$ is sufficient to ensure
concentration.\\
\indent
Similarly to the GS regime, in the CS regime we employ
for $a_{m}$ a Gaussian random vector whose (diagonal-form) covariance matrix 
also depends on its mean/center (here, $s_{m}$), however with 
additional concentration features.
For this, we define the standard-deviations vector 
\begin{equation*}
\sigma_m :=\left( \sigma_m ^{1},\ldots,\sigma_m ^{K}\right)
\end{equation*}
whose $k$-th coordinate is $\sigma_m^{k} := |s_{m}^{k}|$.
Furthermore, we group the indices $\{1,\ldots,n\}$
into $K$ blocks of equal size $\frac{n}{K}$ (which is an integer, by assumption):
\begin{equation}  \label{eq:bloc}
I_k := \left\{(k-1)*\frac{n}{K} +1,\ldots,k*\frac{n}{K}\right\},
\qquad k=1,\ldots,K.
\end{equation}
To proceed, let $\{W_{i}^{k}: k\in \{1, \dots, K\}, \, i \in I_k\}$ be a family of
independent identically distributed random variables, with 
$\mathcal{N}(0,1)$ (i.e. a standard Gaussian) distribution.
From this, we construct for all $k= 1, \dots, K$, and all $i \in I_k$
the random variables 
\begin{equation}
\tilde{W}_{i}^{k}:=\sigma^k_m W_{i}^{k}+Ks_{m}^{k}
\end{equation}
and subsequently
\begin{equation*}
a_m^k:=\frac{1}{n}\sum_{i\in I_{k}}\tilde{W}_{i}^{k} \ .
\end{equation*}
Accordingly, the random vector
\begin{equation*}
a_m:=\left( a_m^1,\ldots ,a_m^K\right)
\end{equation*}
has mean
\begin{equation*}
\mathbb{E}(a_m)=s_m
\end{equation*}
so that $a_m$ is centered as desired;
moreover, $a_m$ has covariance matrix of diagonal form,
with the diagonal elements
\begin{equation}
\text{Var}\left[ a_m^k\right] = \frac{(s_m^{k})^{2}}{nK},
\qquad k=1,\ldots,K,
\label{fo:varCS}
\end{equation}
which indicates that the realization is much more concentrated than in the GS regime.
Indeed, the divisor $nK$ is --- by construction --- a multiple of $K^{2}$, with 
$K$ being large (e.g. we have $K=404234$ in the neural-network-compression context of Subsection
\ref{subsec:cnndata} below).

\noindent
Skipping ``control''~\eqref{eq:control_c} in the current CS regime, we directly have $\alpha_m = a_m$. Ultimately, in CS regime, the update of the control-variable value $c_{m}$ is constructed by
\begin{equation*}
c_{m+1} := c_{m}, 
\end{equation*}
i.e. it remains unaltered;
moreover, the best-so-far (incumbent) candidate solution $s_{m} \in \mathcal{S}$
is updated according to
\begin{equation*}
s_{m+1} :=
\begin{cases}
\alpha_{m}, & \text{if }\alpha_m \in \mathcal{S} \text{ and }f(\alpha_m) < f(s_m) \, , \\ 
s_{m}, & \text{otherwise.}
\end{cases}
\end{equation*}
Clearly, by construction we have $s_{m+1} \in \mathcal{S}$ and $c_{m+1} \in \mathcal{S}$.


\subsection{Alternating Between Regimes\label{ssec:alternate_idea}}

The two regimes play distinct roles: the concentration of CS is much stronger than that of GS. We will have the opportunity in Section~\ref{sec:properties} to quantify the differences and demonstrate the role of each: ultimately GS finds a candidate close to the minimum of $f$, and CS refines it.

These roles naturally lead to starting with GS and then,
when the current minimum no longer evolves, switching to CS to refine near
the boundary. This approach works, and its application to network
compression is detailed in Subsection~\ref{ssec:GSpuisCS}.

However, the choice of regime change is crucial and results in a very
localized search around the current center after the switch; escaping this
region becomes almost impossible. To avoid this pitfall, we propose to permanently alternate
between GS and CS at every step: beginning at $m=0$, even steps use GS, and odd steps use CS.
The results of this method, which we recommend as it removes the regime-switching parameter, are presented in 
Subsection~\ref{ssec:GSalternanceCS}.

In the alternating case, the GS regime follows its own control-variable $c$
within $\mathcal{S}$. Updates of $c$ are not considering the value of $f$. The CS regime
simulates locally around the best-so-far candidate solution (current minimizer) $s$ whose updates occur only if $f$ decreases and the point is in $\mathcal{S}$. Besides, if GS finds a new best-so-far candidate solution, the control-variable $s$ of CS is updated. Conversely, CS cannot update the GS control-variable $c$.


\subsection{Remarks}

\begin{itemize}
\item Both means (centers) $c_m$ and $s_m$ belong to $\mathcal{S}$. 
Moreover, the center $s_m$ always contains the best-so-far minimizer.

\item The constant $C$ used in the ``control''~\eqref{eq:control_c} is a parameter to be set. As long as $C > 2$, any point in $\mathcal{S}$ is reachable regardless of the initial center (mean) $c_0$
(cf. Proposition~\ref{prop:parcours}).

\item The choice of the normal (Gaussian) distribution is neither critical in the GS regime nor in the CS regime. 
One could use log-concave
distribution while preserving the conclusions of all the
propositions in this paper. The code, in fact, provides the option to select
between the normal and Laplace distributions. 
\item When $K$ is large --- and since $n$ is assumed to be a multiple of $K$
--- the choice $n = 1 \ast K$ already ensures significant concentration around the center in the CS regime.
\end{itemize}

\vspace{-0.6cm}


\subsection{Synthesis}
In a similar spirit to \textit{e.g.} \cite{franzin2019revisiting} (who deal with simulated annealing in 
combinatorial/discrete optimization problems), for better transparency we synthesize our new \textit{Controlled Iterated Concentrated Search (CICS)} procedure into eight different components,
to end up with the (pseudo-)algorithm displayed in Algorithm~\ref{algo:description}.


\SetKwProg{OwnObj}{Objective:}{}{}
\SetKwProg{OwnGiv}{Given:}{}{}
\SetKwProg{OwnInp}{Input:}{}{}
\SetKwProg{OwnOut}{Output:}{}{}
\SetKwProg{OwnEnd}{end}{}{}
\SetKwProg{OwnReturn}{return}{}{}

\enlargethispage{1.0cm}

\begin{algorithm}\label{algo:description}

\OwnGiv{}{an open and arc-connected search space (constraint set) $\mathcal{S} \subset \mathbb{R}^{K}$ with high-dimension; \\
a continuous  function $f: \mathcal{S} \mapsto \mathbb{R}$ to be minimized.\\
}

\BlankLine

\OwnObj{}{to find an approximation of the optimal solution $s^{*}$ and the corresponding optimal objective-function value $f(s^{*})$.
}

\BlankLine

\OwnInp{}{
\nl an \textsc{initial candidate solution} $s_{0}$,\\
\nl a \textsc{regime switching scheme}.
}

\BlankLine

\OwnOut{}{
the best candidate solution $s_{appr}^{*}$ found during the search
and the corresponding objective-function value $f(s_{appr}^{*})$.
}

\BlankLine

\nl $\hat{s} \leftarrow s_{0}$;

\BlankLine

\nl $s_{appr}^{*} \leftarrow s_{0}$;

\BlankLine

\nl $m \leftarrow 0$;

\BlankLine

\nl $c_{0} \leftarrow$ initialize the control parameter according to \textsc{initial control};

\BlankLine

\nl \While{\textsc{stopping criterion} is not met}{

\BlankLine

\nl 

\vspace{-0.5cm}

\While{\textsc{regime switching scheme} is executed}{

\BlankLine

\nl\If{label $\ell$ in \textsc{regime switching scheme} is equal to 1}{
\nl simulate an auxiliary point $\alpha_{m}$ according to \textsc{global search regime};\\
\nl in terms of $\alpha_{m}$, compute the best-so-far candidate solution $s_{m+1}$ according to \textsc{candidate solution update 1};\\
\nl in terms of $\alpha_{m}$, compute the control parameter $c_{m+1}$ according to \textsc{control update 1};
}
\nl\Else{
\nl simulate an auxiliary point $\alpha_{m}$ according to \textsc{concentrated search regime};\\
\nl in terms of $\alpha_{m}$, compute the best-so-far solution $s_{m+1}$ 
according to \textsc{candidate solution update 2};\\
\nl in terms of $\alpha_{m}$, compute the control parameter $c_{m+1}$ according to \textsc{control update 2};
}
}
\nl \OwnEnd{}{}

\BlankLine

\nl $s_{appr}^{*} \leftarrow \widehat{s}_{m+1}$

\BlankLine

\nl $m \leftarrow m+1$;

}

\nl \OwnEnd{}{}

\BlankLine

\vspace{-0.2cm}

\nl \OwnReturn{$s_{appr}^{*}$ and $f(s_{appr}^{*})$;}{} 

\caption{Component-based formulation of the CICS. The components are written in \textsc{smallcaps}.}

\end{algorithm}


\vspace{0.5cm}
\noindent
The eight components in the above CICS are:

\vspace{-0.3cm}

\begin{itemize}

\item the choice of the initial candidate solution \textsc{initial candidate solution} $s_{0}$
(line 1 of the algorithm) given by the application;

\item  the choice of the underlying \textsc{regime switching scheme} (lines 2 and 8);
the underlying label $\ell$ has values $1$ or $2$ (line 9); here described in 
Subsection \ref{ssec:alternate_idea};

\item the \textsc{stopping criterion} which determines when the execution is finished 
(line 7); here is left free to the user;

\item  the \textsc{global search regime} (label $\ell=1$) which is used to 
generate new candidate solutions which are highly likely \textit{substantially far} from the
best-so-far (incumbent) candidate solution (line 10); here described in Subsection \ref{ssec:GS};

\item the \textsc{concentrated search regime} (label $\ell=2$) which is employed to 
generate new candidate solutions which are highly likely \textit{close} to the 
best-so-far candidate solution (line 14); here described in Subsection \ref{ssec:CS};

\item the \textsc{initial control}, which determines the initial control parameter $c_{0}$
(line 6); here set to be equal to $s_0$;

\item the \textsc{candidate solution update} which determines the exact formula
of updating the best-so-far candidate solution in terms of the auxiliary random point $\alpha$
which is simulated in dependence of the current control parameter $c$
(lines 11 and 15); here described at the end of both Subsections~\ref{ssec:GS} and~\ref{ssec:CS};

\item the \textsc{control update} which determines the exact formula
of updating the current control parameter $c$ in terms of the above-mentioned auxiliary random point $\alpha$ (lines 12 and 16); here described at the end of both Subsections~\ref{ssec:GS} and~\ref{ssec:CS}.

\end{itemize}


\section{Properties of the Algorithm \label{sec:properties}}
This section justifies the relevance of the two regimes GS and CS by formally
demonstrating that each achieves its objective:

\begin{itemize}

\item 
GS explores the entire space in finite time. Indeed, in Proposition~\ref{prop:parcours}
below (cf. Subsection \ref{ssec:lemmes_GS})
we will show that for any arbitrary initial control $c_{0} \in \mathcal{S}$, the GS regime will always reach 
--- in finite time and with strictly positive probability ---
any arbitrarily small neighborhood of the global minimizer (i.e. the overall-best candidate solution,
which is not necessarily unique) of $f$.

\item
In contrast, in the below-mentioned Lemma \ref{lem:voisinage} and 
Proposition \ref{prop:concentration} dedicated to the CS regime 
(cf. Subsection \ref{ssec:lemmes_CS}) we will show that: \\
i) our CS-search for better (i.e.~lower-f-valued) candidate solutions concentrates in a neighborhood of 
the best-so-far candidate solution $s_m$ through the reduction of its variance caused by 
a large parameter $n$, and \\
ii) that the realizations of our correspondingly employed simulations (of $a_m$) 
are most likely not at the \text{center} of this neighborhood,
due to an argument based on the high dimension $K$ of the underlying space.

\end{itemize}

In the following, we give the exact details.


\subsection{GS Explores the Space\label{ssec:lemmes_GS}}

The two regimes run in parallel, and CS has no impact on GS since $c_m$ 
cannot be updated by CS. In this section, we focus on the progression 
of the algorithm under GS, as if each step $m$ were performed under GS.

\begin{proposition}[The Algorithm Explores All of $\mathcal{S}$]
\label{prop:parcours} Let $\beta$ be a point in $\mathcal{S}$. For any initial
center $c_0 \in \mathcal{S}$ and any $\eta > 0$ small enough to ensure $B_{\eta}(\beta)
\subset \mathcal{S}$, there exists $m_0 \in \mathbb{N}$ such that $\mathbb{P}
(c_{m_0+1} \in B_{\eta}(\beta)) > 0$.
\end{proposition}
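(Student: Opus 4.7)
The plan is to leverage path-connectedness of $\mathcal{S}$ and the strictly positive density of non-degenerate Gaussians on open sets, and then to chain together finitely many one-step GS transitions, each of strictly positive probability.

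\emph{Setup.} Since $\mathcal{S}$ is open and path-connected, fix a continuous path $\gamma:[0,1]\to\mathcal{S}$ with $\gamma(0)=c_0$ and $\gamma(1)=\beta$. Compactness of $\gamma([0,1])$ together with openness of $\mathcal{S}$ yields $\delta\in(0,\eta)$ with $\bigcup_{t\in[0,1]}B_\delta(\gamma(t))\subset\mathcal{S}$. I would then discretize $\gamma$ into intermediate targets $z_0=c_0,z_1,\ldots,z_N=\beta$ on the path, together with radii $0<r_i<\delta/2$, chosen so that for each $i$, every $v\in B_{r_i}(z_i)$ and every $k$, the target coordinate $z_{i+1}^k$ lies \emph{strictly} inside the GS cut-off window $[v^k-C|v^k|,\,v^k+C|v^k|]$ along with an $r_{i+1}$-margin. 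The hypothesis $C>2$ is used precisely here: for $v^k>0$ this window equals $[-(C-1)v^k,(C+1)v^k]$ and strictly contains some negative values, so a coordinate may change sign in a single step (analogously for $v^k<0$); where a coordinate of $\gamma$ vanishes at some $t^\ast$, one places two consecutive targets on either side close enough that the jump across zero fits into the slack $C-2>0$.

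\emph{Inductive step.} Assume $c_m\in B_{r_i}(z_i)\subset\mathcal{S}$. The event
\begin{equation*}
E_{m+1}:=\{a_m\in B_{r_{i+1}}(z_{i+1})\}\cap\bigcap_{k=1}^{K}\bigl\{|a_m^k-c_m^k|\leq C|c_m^k|\bigr\}
\end{equation*}
has strictly positive conditional probability given $c_m$: the product Gaussian law of $a_m$ has strictly positive density on every nonempty open set, while by construction $B_{r_{i+1}}(z_{i+1})$ is contained componentwise in the product of cut-off intervals around $c_m$. On $E_{m+1}$, formula~\eqref{fo:alphamk} gives $\alpha_m=a_m$, membership $\alpha_m\in\mathcal{S}$ follows from the $\delta$-tube property, hence $c_{m+1}=\alpha_m\in B_{r_{i+1}}(z_{i+1})$. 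Multiplying the $N$ positive conditional probabilities yields $\mathbb{P}(c_N\in B_{r_N}(\beta))>0$; with $r_N\leq\eta$ this gives the stated conclusion for $m_0:=N-1$.

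\emph{Main obstacle.} The only non-routine work is the joint construction of $(z_i)$ and $(r_i)$ in the Setup, which must enforce the \emph{strict} inclusion of a neighborhood of $z_{i+1}$ inside the product cut-off window uniformly over an $r_i$-ball around $z_i$, with particular attention to coordinates of $\gamma$ that change sign or vanish. This is exactly where the strict inequality $C>2$ is indispensable; everything else amounts to routine multiplication of positive Gaussian probabilities along a discretized continuous path.
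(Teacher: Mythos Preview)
Your strategy differs from the paper's. Both begin with a path $\gamma$ from $c_0$ to $\beta$ in $\mathcal{S}$, but the paper does \emph{not} try to keep the centers $c_m$ in a tube around $\gamma$. Instead it confines each transition to an ``annulus'' $H_m := H_{C\delta_m}(c_m)\setminus\operatorname{int}\bigl(H_{2\delta_m}(c_m)\bigr)$ (nonempty exactly because $C>2$), so that the coordinate magnitudes $|c_m^k|$ are strictly \emph{increasing} along the run. Consequently the cut-off windows only expand, a uniform lower bound $\underline{\delta_0}=\min_k|c_0^k|$ on the progress along $\gamma$ is available at every step, and one gets the explicit estimate $m_0\le\|\gamma\|_1/\underline{\delta_0}$. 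Thus in the paper the role of $C>2$ is to make the annulus nonempty; sign changes are never confronted because the $|c_m^k|$ march away from zero. Your approach, by contrast, stays near $\gamma$ and uses $C>2$ to allow a one-step sign change in a coordinate. What you buy is a more standard tube-and-discretize picture; what the paper buys is a clean, quantitative finiteness bound with no zero-crossing analysis at all.

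Your route can be completed, but the sketch understates the remaining work. After you jump across a coordinate hyperplane and land at $z_{i+1}^k$ of size $\epsilon'$, the cut-off half-width in that coordinate is only $C\epsilon'$; each further step can at most multiply $|c_m^k|$ by $(C{+}1)$, so escaping back to order one costs $\sim\log_{C+1}(1/\epsilon')$ steps, during which the \emph{other} coordinates must still track $\gamma$ inside their own (bounded) windows. Your sentence about placing ``two consecutive targets on either side'' handles the crossing itself but not this geometric escape, nor the case where several coordinates of $\gamma$ are simultaneously small, nor a path with $\gamma(t)^k=0$ on an interval (which forces $z_i^k=0$ if the $z_i$ lie on $\gamma$). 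All of this is fixable---perturb $\gamma$ within the $\delta$-tube to isolate the zeros, then interleave the geometric escape with advancement along the path---but it is the real content of the ``non-routine'' construction you flag, and it is precisely what the paper's annulus device is designed to sidestep.
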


\begin{proof}
Since $\mathcal{S}$ is arc-connected, there exists a path $\gamma : [0,1] \mapsto \mathcal{S}$ such that $\gamma(0) = c_0$ and $\gamma(1) = \beta$.  
We will construct a sequence of centers $c_0, \ldots, c_{m_0}$ under the GS regime such that the probability of transitioning from one to the next is nonzero, and such that the target ball $B_{\eta}(\beta)$ becomes reachable from $c_{m_0}$.

During the GS regime, any point can be reached in a single step with nonzero probability (albeit potentially small) 
by $a_m$. However, we have imposed a control mechanism (namely,~\eqref{eq:control_c}) for transitioning from $a_m$ to $\alpha_m$. We will show that despite this control, the ball around $\beta$ remains reachable in a finite number of iterations.

The path begins at $c_0$, an arbitrary point in $\mathcal{S}$. First, we define
\[
\underline{\delta_0} = \min_{k} |c_0^k|
\]
and assume without loss of generality that $\eta < \underline{\delta_0}$.

\noindent
For $\delta \in \mathbb{R}_{+}^K$ and $c \in \mathbb{R}^K$, we define the hyperrectangle
\[
H_{C*\delta}(c) = \{x \in \mathbb{R}^K ~|~ |x^k - c^k| \leq C * \delta^k \textrm{ for all k}\} .
\]
\noindent
Additionally, we define
\[
H_m = H_{C*\delta_m}(c_m) - int\left(H_{2*\delta_m}(c_m)\right)
\]
where 
\[\delta_m^k = |c_m^k|, ~\forall 1\le k \le K . \]
Since $C > 2$, $H_m$ is nonempty. The purpose of this ``annulus'' 
$H_m$ is to maintain $\min_k \delta_m^k \geq \underline{\delta_0}$ along the path $c_0,\ldots,c_{m_{0}}$; indeed along iterations, the absolute value of each coordinate is striclty increasing. This ensures that $\delta_{m+1}^k > \delta_m^k > \underline{\delta_0}$ for all $k$. 
This allows~\eqref{eq:control_c} to be less and less restrictive and therefore increases the speed along path $\gamma$. Let us now restrict realizations on a tubular neighborhood of $\gamma$ included in $\mathcal{S}$.
We define
\[
t_m = \sup \{t \in [0,1] ~|~ \gamma(t) \in H_{C*\delta_m}(c_m)\}
\]
and $\beta_m = \gamma(t_m)$.

\textit{Case 1}: $t_m = 1$. In this case, $\beta_m = \beta \in H_{C*\delta_m}(c_m)$, and $B_{\eta}(\beta)$ becomes reachable by $\alpha_m$ with nonzero probability. Specifically, "$a_m \in \left(B_{\eta}(\beta) \cap H_{C*\delta_m}(c_m)\right)$" is an event with nonzero probability. Furthermore, $a_m \in H_{C*\delta_m}(c_m)$ implies that 
the control~\eqref{eq:control_c} is satisfied and $\alpha_m = a_m$, which concludes the argument: $m=m_0$.

\textit{Case 2}: $t_m < 1$. In this case, there exists $\epsilon \in [0, \eta]$ such that $B_{\epsilon}(\beta_m) \subset \mathcal{S}$ because $\mathcal{S}$ is open. Now, "$a_m \in \left(B_{\epsilon}(\beta_m) \cap H_m\right)$" is an event with nonzero probability. Moreover, $a_m \in H_m$ implies that the control~\eqref{eq:control_c} is satisfied and $\alpha_m = a_m$. Since $B_{\epsilon}(\beta_m) \subset \mathcal{S}$, we have $\alpha_m \in \mathcal{S}$, and thus
\[
c_{m+1} = \alpha_m = a_m .
\]
Additionally, by the definition of $H_m$, we have that $\delta_{m+1}^k > \delta_m^k > \underline{\delta_0}$
for all $k$.  
Next, we analyze the path between $c_m$ and $c_{m+1}$. Since $c_{m+1} \in B_{\epsilon}(\beta_m)$ and $\epsilon < \eta < \underline{\delta_0} \leq \min_k \delta_{m+1}^k$, we have $\beta_m \in B_{\epsilon}(c_{m+1}) \subset H_{C*\delta_{m+1}}(c_{m+1})$. Furthermore, because of $C > 2$ we obtain
\[
B_{\underline{\delta_0}}(\beta_m) \subset B_{\underline{\delta_0}+\epsilon}(c_{m+1}) \subset H_{2*\delta_{m+1}}(c_{m+1}) \subset H_{C*\delta_{m+1}}(c_{m+1}) .
\]
Thus, $\beta_m\in H_{C*\delta_{m+1}}(c_{m+1})$, leading to $t_{m+1} \geq t_m$. Now, either $t_{m+1} = 1$ or $||\beta_{m+1} - \beta_m||_1 \geq \underline{\delta_0}$; indeed, $\beta_{m+1}$ belongs to the annular with center $c_{m+1}$ and whose radius is greater than $\underline{\delta_0}$.

In conclusion, we construct a sequence of points 
$$\beta_0 = \gamma(0) = c_0, \ldots, \beta_m = \gamma(t_m), \ldots, \beta_{m_0+1} = \gamma(1) = \beta$$ along the path $\gamma.$
To these points are associated a list of centers $c_0, \ldots, c_{m_0}$ reached by the algorithm under the GS regime and which respect $c_m\in B_{\eta}(\beta_m)$ for all $m$. 

The finite integer $m_0$ exists because at each step, $t_{m+1} > t_m$ and $||\beta_{m+1} - \beta_m||_1 \geq \underline{\delta_0}$ or $t_{m+1} = 1$. We can even estimate: $m_0 \leq \frac{||\gamma||_1}{\underline{\delta_0}}$.
\end{proof}


\subsection{CS Explores the Neighborhoods of Local Minima\label{ssec:lemmes_CS}}
This section analyzes the exploration properties under the CS regime, a
method for exploring the local neighborhoods of successive best-so-far minimizers $s_m$ of $f$ obtained during previous iterations of the algorithm. Under the hypothesis of a theoretical minimizer near the boundary, CS refines the raw center $s_m$ provided by GS; the small amplitude concentrated search is performed to refine the minimization.

For sake of clearness we simplify several notations: center $s_m$ is quoted as $\mu$; the standard-deviation vector $\sigma_m = (\sigma_m^1,\ldots, \sigma_m^K)$ is quoted as $\sigma$;  $X_{n}(\sigma)$ denotes the realizations $a_{m,n}$ for which the relevant parameter here is $n$ whereas $m$ can be dropped.

We construct sequences of Gaussian vectors $X_{n}(\sigma) := \left(
X_{n}^{1}(\sigma^1), \dots, X_{n}^{K}(\sigma^K) \right)$ taking values in 
$\mathbb{R}^{K}$ with independent components such that $\mathbb{E}\left[ 
X_{n}(\sigma) \right] = \mu = \left( \mu^{1}, \dots,
\mu^{K} \right)$, and whose co-variance matrix is $\sigma I_{\mathbb{R}^K}\sigma^T$. Here, $\sigma$ is a strictly positive vector in $\mathbb{R}^K$.

Denote by $V_{\bar{\sigma}}$ (resp. $V_{\underline{\sigma}}$) $\in \mathbb{R}^K$ the vector for which all components 
are equal to $\bar{\sigma} = \max_{k}\sigma^k$ (resp. $\underline{\sigma} = \min_{k}\sigma^k$) and let 
\begin{equation*}
S_{n}^{2}:=\frac{(\bar{\sigma})^{2}}{nK}.
\end{equation*}

\begin{lemma}[The CS regime explores the neighborhood of its center]
\label{lem:voisinage} The random vector $
X_{n}(\sigma)$ concentrates in an annulus around $\mu$ with probability lower-bounded by $\eta\geq 0$ independently of $n$. 

Formally, there exists a constant $\eta_K \in (0,1)$, such that for any $\eta \in \left( 0,\eta_K\right)$
there exists $\theta \in \left( 0,1\right) $ 
such that, for all $n$ (multiple of $K$) it holds 
\begin{equation*}
\mathbb{P}\left( ||{X}_{n}(\sigma)-\mu||_{2}\in D_{\eta }\right) >\eta 
\end{equation*}
where $D_{\eta}:=\left( D_{\eta }^{-},D_{\eta }^{+}\right) $ with $D_{\eta
}^{-}=(1-\theta )S_{n}\sqrt{K}$ and $D_{\eta }^{+}=(1+\theta )S_{n}\sqrt{K}$.
\end{lemma}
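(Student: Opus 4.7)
My plan is to reduce the claim to a classical second-moment concentration bound for a weighted sum of independent chi-squares, exploiting the high dimension $K$. The first step is to use independence and Gaussianity to rewrite $X_n^k-\mu^k = (\sigma^k/\sqrt{nK})\,Z_k$ with $Z_k\sim\mathcal{N}(0,1)$ iid (consistent with the variance $(\sigma^k)^2/(nK)$ given in \eqref{fo:varCS}), and to study the scalar
$$
Y_n \;:=\; \|X_n(\sigma)-\mu\|_2^2 \;=\; \frac{1}{nK}\sum_{k=1}^K (\sigma^k)^2 Z_k^2,
$$
which is a positive linear combination of $K$ independent $\chi_1^2$ variables. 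Its mean is $\mathbb{E}[Y_n]= \|\sigma\|_2^2/(nK) = r\,\bar\sigma^2/n$ where $r := \|\sigma\|_2^2/(K\bar\sigma^2)\in(\underline\sigma^2/\bar\sigma^2,\,1]$, and its variance is $\mathrm{Var}[Y_n] = 2\sum_k(\sigma^k)^4/(nK)^2 \le 2\bar\sigma^4/(n^2K)$, i.e.\ smaller than the squared mean by a factor of order $1/K$. This is exactly where the high dimensionality enters.

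Second, I would apply Chebyshev's inequality at deviation $\bar\sigma^2\epsilon/n$ to obtain, for every $\epsilon>0$ and uniformly in $n$,
$$
\mathbb{P}\!\left(Y_n\in \tfrac{\bar\sigma^2}{n}\bigl(r-\epsilon,\,r+\epsilon\bigr)\right) \;\ge\; 1-\frac{2}{K\epsilon^2}.
$$
The $n$-independence of this bound is precisely what the statement asks for. I then translate the target event: the annulus condition $\|X_n-\mu\|_2\in D_\eta$ is exactly $Y_n\in\tfrac{\bar\sigma^2}{n}\bigl((1-\theta)^2,(1+\theta)^2\bigr)$, so it suffices to choose $\theta\in(0,1)$ so that $(1-\theta)^2< r-\epsilon$ and $r+\epsilon<(1+\theta)^2$, i.e.
$$
\theta \;\in\; \Bigl(\max\!\bigl(1-\sqrt{r-\epsilon},\;\sqrt{r+\epsilon}-1\bigr),\;1\Bigr).
$$
Because $\underline\sigma>0$ forces $r>0$, any $\epsilon\in(0,r)$ makes this interval non-empty, and I pick any admissible $\theta$.

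Third, I define $\eta_K := 1-2/(K\epsilon^2)$ for a fixed choice $\epsilon=r/2$; this is positive as soon as $K>8/r^2$, which is automatic in the high-dimensional regime that motivates the paper. For every $\eta\in(0,\eta_K)$ the Chebyshev bound above combined with the inclusion produced by the chosen $\theta$ yields $\mathbb{P}(\|X_n-\mu\|_2\in D_\eta)>\eta$ uniformly in $n$, completing the argument.

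The main obstacle I anticipate is the mismatch between the natural concentration centre $\sqrt{r}\,\bar\sigma/\sqrt{n}$ of $\|X_n-\mu\|_2$ and the annulus centre $\bar\sigma/\sqrt{n}=S_n\sqrt{K}$, which is built from the largest component of $\sigma$. When the $\sigma^k$ are very heterogeneous, $r$ can be much smaller than $1$, forcing $\theta$ close to $1$; the strict positivity of $\underline\sigma$ is essential at exactly this point to keep $\theta<1$. A secondary nuisance is that Chebyshev is rather loose: one could sharpen the bound on $\eta_K$ via Bernstein- or Hanson–Wright-type inequalities, but since the lemma only asks for the existence of some positive $\eta_K$ I would stay with Chebyshev to keep the argument elementary and self-contained.
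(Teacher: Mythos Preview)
Your argument is correct and takes a genuinely different route from the paper. The paper first reduces to the isotropic case $\sigma^k\equiv\bar\sigma$ by a stochastic-domination inequality and then invokes an exponential thin-shell bound from \cite{guedon:hal-00795791_new} for the normalized vector $(X_n(V_{\bar\sigma})-\mu)/S_n$, obtaining $\mathbb{P}\bigl(\bigl|\,\|X_n-\mu\|_2-S_n\sqrt{K}\bigr|\ge tS_n\sqrt{K}\bigr)\le C_1\exp(-C_2t\sqrt{K})$; this yields $\theta_\eta=O(1/\sqrt{K})$ and an $\eta_K$ that tends to $1$ exponentially fast and depends only on $K$. You instead keep the anisotropic $Y_n=\sum_k(\sigma^k)^2Z_k^2/(nK)$ and use Chebyshev, which is entirely self-contained and avoids any external reference. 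The trade-offs are: (i) your $\eta_K=1-8/(Kr^2)$ depends on the heterogeneity ratio $r=\|\sigma\|_2^2/(K\bar\sigma^2)$, whereas the paper's $\eta_K$ is uniform in $\sigma$; (ii) your $\theta$ must be at least $1-\sqrt{r/2}$ to bridge the gap between the true concentration radius $\sqrt{r}\,S_n\sqrt{K}$ and the stated centre $S_n\sqrt{K}$, so it does not shrink with $K$, while the paper gets a narrow annulus of relative width $O(1/\sqrt{K})$. For the bare existence claim of the lemma (fixed $\sigma$, some $\eta_K\in(0,1)$) your elementary approach is perfectly adequate; the paper's sharper but heavier machinery matters only if one wants the annulus width and the constants to improve with $K$ independently of $\sigma$.
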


\begin{proof}
Let's begin by making the random vector isotropic. We observe that 
\begin{eqnarray*}
&&\mathbb{P}\left( ||X_{n}(\sigma)-\mu||_{2}\notin \lbrack (1-\theta
)S_{n}\sqrt{K},(1+\theta )S_{n}\sqrt{K}]\right)  \\
&\leq &\mathbb{P}\left(||X_{n}(V_{\bar{\sigma}})-\mu||_{2}\notin \lbrack
(1-\theta )S_{n}\sqrt{K},(1+\theta )S_{n}\sqrt{K}]\right) .
\end{eqnarray*}

Now, $\frac{X_{n}(V_{\bar{\sigma}})-\mathbf{\mu}}{S_{n}}$ satisfies the assumptions
in~\cite{guedon:hal-00795791_new}, and therefore 
\begin{eqnarray*}
\mathbb{P}\left( \left\vert ||X_{n}(V_{\bar{\sigma}})-\mu||_{2}-S_{n}\sqrt{K}
\right\vert \geq S_{n}\sqrt{K}t\right)  &=&\mathbb{P}\left( \left\vert~\left\vert\left\vert
\frac{X_{n}(V_{\bar{\sigma}})-\mu}{S_{n}}\right\vert\right\vert_{2}-\sqrt{K}~\right\vert \geq t
\sqrt{K}\right)  \\
&\leq &C_{1}\exp \left( -C_{2}tK^{1/2}\right) 
\end{eqnarray*}
for positive constants $C_{1}$ and $C_{2}$ independent of $n$ and all positive $t$.

For a given $\eta\in(0,1)$, we define 
\begin{equation}
\theta_{\eta} = \frac{\ln(C_1)-\ln(1-\eta)}{C_2\sqrt{K}}  \label{cond theta} .
\end{equation}

As long as $K$ is large enough to have $\frac{\ln(C_1)}{C_2\sqrt{K}}< 1$, there exists a $\eta_K$ which satisfies the claim 
\begin{equation}
\eta_K = \sup \left\{\eta~/~\eta>0 ~\&~\theta_{\eta} < 1\right\} ;
\end{equation}
furthermore, 
$$\lim_{K\mapsto\infty}\eta_K = 1.$$
\end{proof}
Lemma~\ref{lem:voisinage} directly applies to the vector $a_{m}$ simulated
under the CS regime, provided that the center $s_{m}$ does not move, in
which case $\sigma _{m}$ is constant. This property applies at the end of
the algorithm when GS can no longer improve the current minimum, and while
CS has not yet improved the minimum and updated the center.

We now prove that given $|X_{n}(\sigma)-\mu||_{2}\in D_{\eta }$, 
realizations of  $X_{n}(\sigma)$ will indeed fill the annulus with center $\mu$, inner radius $D_{\eta }^{-}$ and\ outer radius $D_{\eta }^{+}.$

\begin{proposition}[CS regime efficiently fills the annulus $D_{\eta}$]
\label{prop:concentration} 

Let $\underline{\epsilon} < \overline{\epsilon}$ be both in $D_{\eta}$.
Then, conditionally on $||X_{n}(\sigma)-\mu||_{2}\in D_{\eta}$, there exists $\gamma \in \left( 0,1\right) $ such that
\begin{equation}
\mathbb{P}\left( \left. ||X_{n}(\sigma)-\mu||_{2}\in  (\underline{\epsilon},\overline{\epsilon}\right)~\vert~||X_{n}(\sigma)-\mu||_{2}\in
D_{\eta}\right) =\mathbb{\gamma}  \label{P(X_n-alfa<eps)=eta}
\end{equation}
holds independently of $n$.
\end{proposition}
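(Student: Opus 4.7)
The plan is to exploit a scaling invariance: the whole point of the construction in the CS regime is that the coordinate-wise variances carry a factor of $1/(nK)$, so the rescaled deviation vector $\sqrt{nK}\,(X_n(\sigma)-\mu)$ has a distribution that does not depend on $n$ at all. Once this is observed, the $n$-independence of $\gamma$ follows essentially by change of variables.

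First, I would observe that by formula~\eqref{fo:varCS} the coordinates $X_n^k(\sigma^k)-\mu^k$ are independent with $\mathcal{N}(0,(\sigma^k)^2/(nK))$ distribution. Therefore the vector
\[
Y := \sqrt{nK}\,(X_n(\sigma) - \mu)
\]
is distributed as $(\sigma^1 Z^1, \ldots, \sigma^K Z^K)$, where $Z$ is a standard Gaussian vector in $\mathbb{R}^K$; in particular the law of $\|Y\|_2$ is free of $n$.

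Second, I would re-parameterise the endpoints. Since $\underline{\epsilon},\overline{\epsilon}\in D_\eta=((1-\theta)S_n\sqrt{K},(1+\theta)S_n\sqrt{K})$, there exist $u_1<u_2$ in $(1-\theta,1+\theta)$ such that $\underline{\epsilon}=u_1 S_n\sqrt{K}$ and $\overline{\epsilon}=u_2 S_n\sqrt{K}$. Using the identity $S_n\sqrt{nK}=\bar{\sigma}\sqrt{K}$, multiplying through by $\sqrt{nK}$ turns the conditional probability of interest into
\[
\gamma \;=\; \mathbb{P}\!\left(\|Y\|_2\in(u_1\bar{\sigma}\sqrt{K},\,u_2\bar{\sigma}\sqrt{K})\ \Big|\ \|Y\|_2\in((1-\theta)\bar{\sigma}\sqrt{K},\,(1+\theta)\bar{\sigma}\sqrt{K})\right),
\]
where every quantity appearing on the right is independent of $n$. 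This already yields the claimed $n$-independence.

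Finally, I would verify $\gamma\in(0,1)$. Because all $\sigma^k>0$, the law of $Y$ admits a strictly positive density on $\mathbb{R}^K$, so $\|Y\|_2$ admits a strictly positive density on $(0,\infty)$; hence the numerator probability is strictly positive, giving $\gamma>0$. Since the target interval $(u_1\bar{\sigma}\sqrt{K},u_2\bar{\sigma}\sqrt{K})$ is a \emph{proper} open subset of the conditioning interval, the complement within the conditioning interval also has positive probability, so $\gamma<1$. I do not anticipate a serious obstacle: the only point requiring care is the bookkeeping that shows both $D_\eta$ and the sub-interval $(\underline{\epsilon},\overline{\epsilon})$ rescale by the same factor $\sqrt{nK}$, which is immediate once both are written as multiples of $S_n\sqrt{K}$.
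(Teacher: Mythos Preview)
Your argument is correct and considerably cleaner than the paper's. The key observation---that $\sqrt{nK}\,(X_n(\sigma)-\mu)$ has the fixed law of $(\sigma^1 Z^1,\ldots,\sigma^K Z^K)$---dispatches the $n$-independence immediately, and your positivity argument via the strictly positive density of $\|Y\|_2$ is sound. The only interpretive point to make explicit is that ``$\underline{\epsilon},\overline{\epsilon}\in D_\eta$'' must be read as a family indexed by $n$ with fixed relative positions $u_1,u_2\in(1-\theta,1+\theta)$; you do exactly this, and the paper's proof tacitly relies on the same parametrisation.

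The paper takes a different route: it introduces quantile parameters $\underline{\delta},\overline{\delta}$ of the standard normal, sandwiches $\|X_n(\sigma)-\mu\|_2^2$ between scaled $\chi^2(K)$ variables via the extremal variances $\underline{\sigma},\overline{\sigma}$, and then invokes the normal approximation $(\chi^2(K)-K)/\sqrt{2K}\approx\mathcal{N}(0,1)$ to extract an $n$-free lower bound $A\,[\overline{\delta}-\underline{\delta}]$. This approach is more laborious and, because of the approximation step, less rigorous as written; it yields only a lower bound rather than the exact equality the statement asserts. Your scaling argument gives the exact value of $\gamma$ without any approximation, so it actually proves the proposition as stated more directly. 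What the paper's approach buys, if anything, is a somewhat explicit handle on $\gamma$ in terms of normal quantiles, but this is not needed for the claim.
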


\begin{proof}
For any $\beta \in(0,1)$ denote by $q_{\beta}$ the $\beta$-quantile
of the standard normal distribution, hence $P\left( \mathcal{N}(0,1)\leq q_{\beta }\right)=\beta$.

We choose $\underline{\delta}<\overline{\delta}$, both in $(0,1)$ through
\begin{equation*}
\underline{\epsilon}^{2}=\left( 1+\frac{\sqrt{2}q_{\underline{\delta} }}{\sqrt{K}}\right) 
\frac{\underline{\sigma }^{2}}{n}
\end{equation*}
and 
\begin{equation*}
\overline{\epsilon} ^2=\left( 1+\frac{\sqrt{2}q_{\overline{\delta}}}{\sqrt{K}}\right) \frac{\overline{\sigma }^{2}}{n}.
\end{equation*}

Denoting
 
$$A:=\frac{1}{\mathbb{P}\left( ||\mathbf{X}_{n}(\sigma)-\mathbf{\mu}||_{2}\in
D_{\eta}\right)}$$

it holds:

\begin{eqnarray*}
&&\mathbb{P}\left( \left. ||X_{n}(\sigma)-\mathbf{\mu}||_{2}\in  (\underline{\epsilon},\overline{\epsilon}\right)~\vert~||X_{n}(\sigma)-\mu||_{2}\in
D_{\eta}\right)\\
&=&A.\mathbb{P}\left( ||X_{n}(\sigma)-\mu||_{2}\in  (\underline{\epsilon},\overline{\epsilon})\right)\\
&=&A.\left[\mathbb{P}\left( ||X_{n}(\sigma)-\mu||_{2}\geq \underline{\epsilon}\right)- \mathbb{P}\left( ||X_{n}(\sigma)-\mu||_{2}\geq \overline{\epsilon}\right) \right]\\
&\geq &A.\left[ \mathbb{P}\left( \chi ^{2}(K)\geq nK\underline{\epsilon} ^{2}/\underline{
\sigma }^{2}\right) -\mathbb{P}\left( \chi ^{2}(K)\geq nK\overline{\epsilon} ^2/\overline{\sigma }^{2}\right) \right]  \\
&\simeq&A.\left[ \mathbb{P}\left( \mathcal{N}(0,1) \geq \frac{n\sqrt{K}\underline{\epsilon} ^{2}/\underline{\sigma }^{2}-\sqrt{K}}{\sqrt{2}}\right) -\mathbb{P}\left( \mathcal{N}(0,1) \geq \frac{n\sqrt{K}\overline{\epsilon}^2/\overline{\sigma }^{2}-\sqrt{K}}{\sqrt{2}}\right) 
\right] ,
\end{eqnarray*}
where we used the approximation of a Chi square random variable $\chi ^{2}(K)$ with $K$
degrees of freedom  which asserts that $\left( \chi ^{2}(K)-K\right) /\sqrt{2K}$ 
is close to a standard normal random variable $\mathcal{N}(0,1)$ for large $K$.

Eventually, 
$$ \mathbb{P}\left( \left. ||X_{n}(\sigma)-\mu||_{2}\in  (\underline{\epsilon},\overline{\epsilon}\right)~\vert~||X_{n}(\sigma)-\mu||_{2}\in
D_{\eta}\right) \geq A\left[\overline{\delta} - \underline{\delta}\right] . $$

With $\overline{\epsilon}$ and $\underline{\epsilon}$ in $D_{\eta}$, we prove that $\overline{\delta}$ and $\underline{\delta}$ exist with $\underline{\delta} < \overline{\delta}$ independently of $n$.

From $\underline{\epsilon }>D_{\eta }^{-}$ defined in Lemma~\ref{lem:voisinage}, it holds 
\begin{eqnarray}
&&\left( 1+\frac{\sqrt{2}q_{\underline{\delta }}}{\sqrt{K}}\right) \frac{
\underline{\sigma }^{2}}{n}\geq \left( 1-\theta \right) ^{2}\frac{\underline{
\sigma }^{2}}{n}  \notag \\
&\Leftrightarrow &q_{\underline{\delta }}\geq \left[ \left( 1-\theta \right)
^{2}-1\right] \frac{\sqrt{K}}{\sqrt{2}}  \label{ineq:q_delta-} .
\end{eqnarray}

Similarly, since $\overline{\epsilon }<D_{\eta }^{+}$ defined in Lemma~\ref{lem:voisinage} 
it also holds 
\begin{eqnarray}
&&\left( 1+\frac{\sqrt{2}q_{\overline{\delta }}}{\sqrt{K}}\right) \frac{
\overline{\sigma }^{2}}{n}\leq \left( 1+\theta \right) ^{2}\frac{\underline{
\sigma }^{2}}{n}  \notag \\
&\Leftrightarrow &q_{\overline{\delta }}\leq \left[ \left( 1+\theta \right)
^{2}\frac{\underline{\sigma }^{2}}{\overline{\sigma }^{2}}-1\right] \frac{
\sqrt{K}}{\sqrt{2}}.  \label{ineq:q_delta+}
\end{eqnarray}
We check that $\overline{\delta }>\underline{\delta }$,
equivalently that $q_{\overline{\delta }}>q_{\underline{\delta }}$ independently upon $n$. Using
inequalities (\ref{ineq:q_delta-}) and (\ref{ineq:q_delta+}) this happens to
hold as long as $\theta $ is close enough to $1$, meaning $\eta $ small
enough. Choosing first $\eta $ such that $\theta _{\eta }$ unites
inequalities (\ref{ineq:q_delta-}) and (\ref{ineq:q_delta+}) yields the
corresponding $\underline{\delta }$ and $\overline{\delta }$ and indicates
that realizations of $X_{n}(\sigma)$ fill the annulus $D_{\eta }$.
\end{proof}

To sum up, Proposition~\ref{prop:concentration} tells that under the CS regime,
when $n$ is large enough (remember that it is a multiple of $K$), the
realizations are concentrated very close to the center, thus locally
improving the current minimum. Moreover, Lemma~\ref{lem:voisinage} indicates
that CS realizations will have a non-zero probability of being in an annulus of radius $S_{n}\sqrt{K}$, thus sufficiently far from the center to improve
the current solution.


\subsection{Overall behavior \label{ssec:overall}}

If the minimum is achieved inside $\mathcal{S}$, it can
be reached by a gradient descent approach since this corresponds to
optimizing over an open subset of $\mathbb{R}^{K}$.
However, in many applications (notably our two applications in sections~\ref{sec:lasso}
and~\ref{sec:pruning}) the minimum of $f$ will be near the boundary of $\mathcal{S}$. This observation leads us to aim at minimizing the perturbation of the center near the boundary;
this amounts to first generate a center near the boundary and secondly be
aware that we are near the boundary. This first step is fulfilled by the
coarse regime called GS. In parallel, a second regime CS refines the best candidate found so far. CS operates in parallel with GS without evaluating proximity to the boundary; henceforth CS avoids
to answer whether the current center is close to the border of $\mathcal{S}$.

Regime GS's ability to reach a point close to the frontier is guaranteed by Proposition~\ref{prop:parcours}, which holds as soon as $C>2$; we have chosen $C=2.1$. With such a tuning, making use of the specific sampling procedure of the random variables $a_{m}^{k}$ developed in Section \ref{sec:algo}, our algorithm does not get trapped in local minimizer of the function $f$. Also Condition~\eqref{eq:control_c} has a specific role whenever the proxy $c_{m}$ of some minimizer of $f$ is close to the boundary of $\mathcal{S} $, which is the usual case in constrained optimization. Indeed in such case assuming that for all $k$ the standard deviation of $a_{m}^{k}$ is proportional to $|c_{m}^{k}|$, whenever \eqref{eq:control_c} fails for at least one coordinate $k$, the
vector $a_{m}$ belongs to $\mathcal{S}$ with very small probability, which
allows for limiting the simulation burden of the algorithm.

In both applications below, the set $\mathcal{S}$ is defined by the level set of a function $g$. The starting point is a solution $s_0$ provided by an external algorithm assumed to be efficient but not optimal. We define $\mathcal{S}$ as the set of vectors $w$ for which $g(w) \leq g(s_0) + \epsilon$. By its very definition the constraint defining the set $\mathcal{S}$ places the initial solution $s_0$ close to $\partial \mathcal{S}$. Assuming $g$ and $f$ are antagonistic, the optimal solution necessarily lies on $\partial \mathcal{S}$. By construction of $\mathcal{S}$, our method tolerates a slight increase in $g$ in order to improve $s_0$ by decreasing $f$.


\section{Improvement of the LASSO\label{sec:lasso}}

To justify the relevance of our proposed method, we first give an application to the omnipresent LASSO context (cf.~\cite{Tibshirani_1996}; see also e.g.~\cite{Chen_2022}, for a broader embedding), where one aims to solve the minimization problem
\begin{equation}
\min_{\theta \in \mathbb{R}^{K}}
\sum_{i=1}^{M} \Big(y_{i} - \sum_{k=1}^{K} z_{i}^{k} \cdot \theta^{k} \Big)^{2} 
+ \lambda \cdot || \theta ||_{1} ,
\label{eq:lasso1}
\end{equation}

\vspace{-0.2cm}
\noindent
with data observations $y_{i}$ ($i=1,\ldots,M)$, deterministic explanatory variables $z_{i}^{k}$ (with $z_{i}^{1} :=1$) and $\ell_{1}-$norm-regularization (penalization) parameter $\lambda \geq 0$. The corresponding (say, Scikit-learn software-delivered solution of the) 
not necessarily unique minimizer is denoted by $\widehat{\theta}$. In the following, our goal is to find points $\theta$ with ``almost-equal'' performance
as $\widehat{\theta}$, but whose $\ell_{1}-$norm $|| \theta ||_{1}$ is smaller than $|| \widehat{\theta} ||_{1}$.\\


\subsection{Rewriting the problem}
\label{subsec:rewriting}

As a key for our corresponding investigations, we use the well-known fact (see e.g. Chapter 9 of \cite{bayesian_perspective_2nd}, and the more general
work of \cite{lorenz2013necessaryconditions}) that the LASSO task \eqref{eq:lasso1} is ``equivalent'' to finding the minimizer for the
\textit{basis pursuit denoising} problem 
(cf. 
\cite{Donoho_2006a}; 
see also e.g.~\cite{Candes_2006};~\cite{Lustig_2007};
\cite{Candes_2008a}; 
\cite{Candes_2008b}; 
\cite{Goldstein_2009};
\cite{Zhang_2014};
\cite{Tran_2015};
\cite{Edgar_2019};
\cite{Tardivel_2022};
\cite{Gupta_2023};
\cite{Bertsimas_2024})
\vspace{-0.3cm}
\begin{eqnarray}
&& 
\min_{\theta \in \mathcal{S}} \, 
||  \theta ||_{1}
\label{eq:bpdn1}
\\[-0.2cm]
&&
\textrm{with} \quad
\mathcal{S} := \Big\{ \theta \in \mathbb{R}^{K}: \, 
\sum_{i=1}^{M} \Big(y_{i} - \sum_{k=1}^{K} z_{i}^{k} \cdot \theta^{k} \Big)^{2} 
\leq \varepsilon 
\Big\}
\nonumber
\end{eqnarray}

\vspace{-0.3cm}
\noindent
for chosen fitting-quality (tolerance) parameter $\varepsilon>0$. Notice that --- as e.g. indicated in \cite{lorenz2013necessaryconditions} ---
problems of type \eqref{eq:bpdn1} are more ``difficult'' to solve than problems of type \eqref{eq:lasso1}. 
Moreover, the relationship between $\lambda$ and $\varepsilon$ is generally not explicit, and thus the transfer between the corresponding problem solutions
is generally also not explicit. Hence, tackling \eqref{eq:bpdn1} for the search of $\theta$ with low $\ell_{1}-$norm (and thus ``low sparsity'') makes sense even if one has already solved \eqref{eq:lasso1}. Indeed, one can start to solve the problem \eqref{eq:bpdn1} with algorithms
which use as an initial control parameter (starting point) the above-mentioned LASSO solution $\widehat{\theta}$. 

\vspace{0.2cm}
\noindent
As far as practical implementation is concerned, some
LASSO-solvers like Scikit-learn (which we employ below)
deliver as output $|| \widehat{\theta} ||_{1}$ and
$r(\widehat{\theta})$ which employs
--- as performance score --- the 
so-called \textit{coefficient of determination}
\begin{equation}
r(\theta) := 1 - \frac{RSSQ(\theta)}{
\sum_{i=1}^{M} \Big(y_{i} - \overline{y} \Big)^{2}
}
\nonumber
\end{equation}
with residual sums of squares $RSSQ(\theta) := 
\sum_{i=1}^{M} \Big(y_{i} - \sum_{k=1}^{K} z_{i}^{k} \cdot \theta^{k} \Big)^{2}$
and data mean $\overline{y} := \frac{1}{M} \cdot \sum_{i=1}^{M} y_{i}$.
Accordingly, we rewrite
\begin{equation}
\mathcal{S} = \Big\{ \theta \in \mathbb{R}^{K}: \, 
r(\theta) \geq r_{0} \Big\} ,
\label{fo:rewrittenS}
\end{equation}
through the correspondence $\varepsilon = (1-r_0) \cdot \sum_{i=1}^{M} \Big(y_{i} - \overline{y} \Big)^{2}$
for some pre-chosen minimum performance score $r_0 \in (0,1)$ which is typically close to $1$.
Our goal is to fix some $r_0 \approx r(\widehat{\theta})$ and to look for solutions
of \eqref{eq:bpdn1} with constraint set \eqref{fo:rewrittenS}, which have smaller $\ell_{1}$ norm than
$\widehat{\theta}$.

\begin{remark}[Norm vs. performance in the Lasso problem]\label{rem:versus}
Diminishing the norm $||\theta||_1$ of a candidate can be performed by removing $\epsilon$ to each of its components; it will reduce the performance $r(\theta)$ continuously. This indicates that the optimal solution belongs to $\partial S$. With the previous construction of $S$ (since $r_0 \approx r(\widehat{\theta})$), the starting point $\hat{\theta}$ is near $\partial S$. Therefore, $\hat{\theta}$ is nearly-optimal solution within constraint set $S$. This implies that the objective of our algorithm is, by design, to improve an existing solution given at start.
\end{remark}


\subsection{Data set and implementation}

We have carried out the goal formulated at the end of the previous Subsection \ref{subsec:rewriting} by employing our above-described \textit{hybrid random concentrated optimization method}, on a concrete problem of dimension $K=5001$ which we randomly generated using the Python Scikit-learn code displayed in Figure~\ref{fig:codeLasso}. This and the code used for the rest of Section \ref{sec:lasso} as well as Section \ref{sec:pruning} is publicly available at \texttt{https://github.com/p052/GS\_CS\_algo.git}. 

\begin{figure}[ht]
\begin{center}
\includegraphics[scale=0.4]{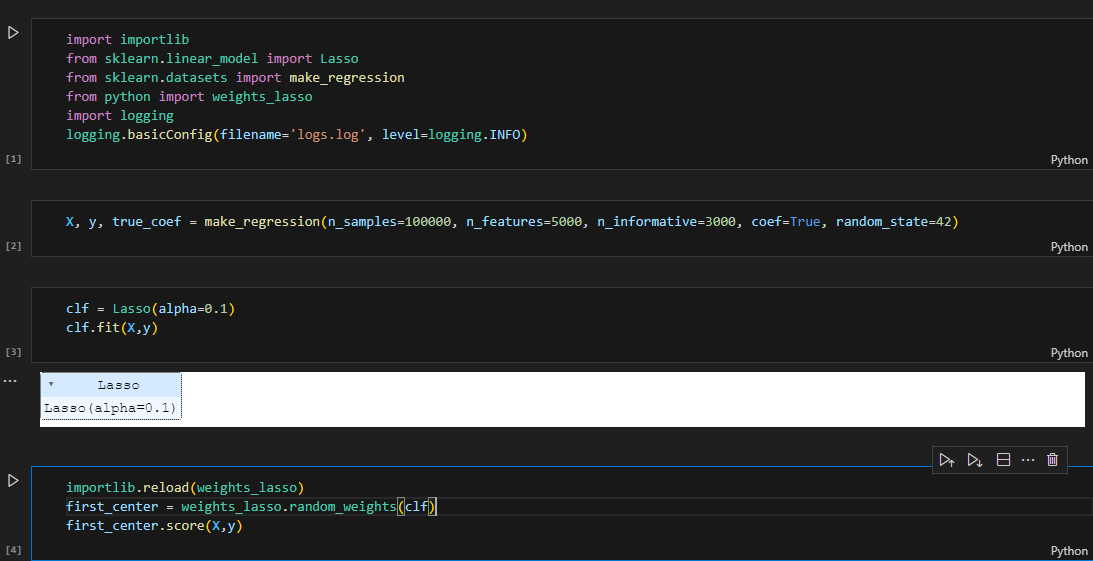}
\caption{Generation of a problem and application of the LASSO}
\label{fig:codeLasso}
\end{center}
\end{figure}

\noindent
For the outcoming LASSO solution $\widehat{\theta}$ we obtained $|| \widehat{\theta} ||_{1} \approx 147285.47$ as well as the performance score of slightly higher than $r_0 = 0.999$ of about $r(\widehat{\theta}) = 0.999566$. \\
As already indicated above, for our algorithm we take as starting point $c_0 := \widehat{\theta}$ and we aim to iteratively reduce $|| c_0 ||_{1}$ while staying within $\mathcal{S}$, that is, maintaining a performance of at least $r_{0} := 0.999$.
For this, we apply our method by alternating between GS and CS regimes and stop arbitrarily after 10000 steps.
We begin with step $m=0$. Then, at each step $m$, if $m$ is even, we use the GS regime, and if $m$ is odd, we use the CS regime. \\
In the GS regime, there are no additional parameters: we generate $a_m$ and update $c_{m+1}$ and $s_{m+1}$ if necessary. In the CS regime, the proved assertions in Section~\ref{ssec:lemmes_CS} guarantee concentration around the center $s_m$. These assertions are true for any fixed $n$.\\
Recall that $n$ is a multiple of $K$, which in the current concrete application is equal to $5001$. We tested two variants: the first one by taking $\frac{n}{K}=1$, and the second with $\frac{n}{K}$ uniformly chosen between $1$ and $m$. The latter variant ensures an asymptotic concentration since $n$ increases with $m$. \\
However, while the provided code includes both variants, the results presented here use the first one. The fact that we observe concentration with $\frac{n}{K}=1$ empirically demonstrates the validity of the lemmas for fixed $n$. The asymptotic actually is in $K$ and not in $\frac{n}{K}$.


\subsection{Results}

Over $10000$ steps, the reduction in the norm and the corresponding performance of successive minima $s_m$ are shown in Figure~\ref{fig:alternance_GS_CS_Lasso}.

\begin{figure}[ht]
    \begin{minipage}[c]{.46\linewidth}
        \centering
        \includegraphics[scale=0.4]{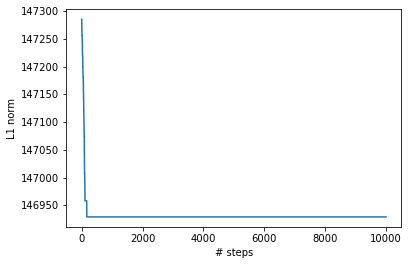}
    \end{minipage}
    \hfill
    \begin{minipage}[c]{.46\linewidth}
        \centering
		\includegraphics[scale=0.4]{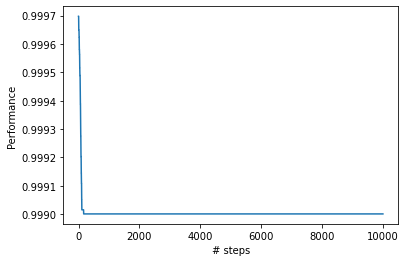}
    \end{minipage}
    \caption{Evolution of the $L_1$ norm and of the performance over 10000 steps}
    \label{fig:alternance_GS_CS_Lasso}
\end{figure}

\noindent
As can be seen, our method works and allows for a marginal improvement of the vector $\widehat{\theta}=c_0$ provided by the LASSO. Since this is a ``good'' vector and thus near
the boundary of $\mathcal{S}$, we expect the GS regime to be less effective and that only the CS regime will reduce the norm. This is exactly what happens,
and we will discuss this further at the end of this subsection. For now, let us focus on the first 400 steps since no further improvement
occurs afterward.

\begin{figure}[ht]
    \begin{minipage}[c]{.46\linewidth}
        \centering
        \includegraphics[scale=0.4]{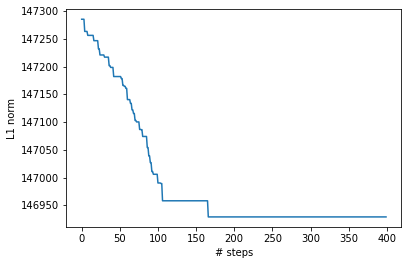}
    \end{minipage}
    \hfill
    \begin{minipage}[c]{.46\linewidth}
        \centering
		\includegraphics[scale=0.4]{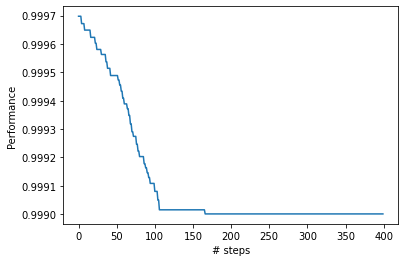}
    \end{minipage}
 	\caption{Evolution of the $L_1$ norm and of the performance over 400 steps}
    \label{fig:alternance_GS_CS_Lasso400}
\end{figure}

\noindent
We see that each reduction in the norm is accompanied by a reduction in performance. We are close to the boundary, and there are no surprises such
as ``better performance and lower norm'', which could occur in the case where (substantially) $s_m\in\mathring{\mathcal{S}}$. 
Here, we push the constraint to its limit. We observe a steady decrease until around step 100, then a single reduction around step
160; the vector does not change afterwards until step $10000$. It is unlikely that we will do better. Altogether, the $\ell_{1}-$norm decreases from $147285.47$ to $146929.35$. An improvement occurs but is minor, at $0.2\%$, as the solution found by the LASSO is nearly perfect. However, this quality indicates that we are near the boundary of $\mathcal{S}$, and it allows us to verify that the alternation between the regimes behaves as expected: GS never updates $s_m$ and CS seldom does.
We can also clearly see that only the CS regime leads to a reduction, and this happens frequently at the beginning. Indeed, out of $400$ steps (as well as $10000$ steps), $31$ steps updated the current minimum, and all of them were odd steps therefore using the  CS regime: 
3, 7, 15, 21, 23, 29, 35, 37, 41, 51, 53, 55, 57, 59, 63, 65, 67, 69, 71, 75, 77, 79, 85, 87, 89, 91, 93, 99, 103, 105, 165.
Summing up, the current application demonstrates the effectiveness of the CS regime in marginally improving a solution near the boundary of $\mathcal{S}$.


\section{Improving Pruning Efficiency\label{sec:pruning}}

As another proof for the relevance of our proposed method,
we give a second application, to the compression of
neural networks.


\subsection{Problem specification}
\label{subsec:probspec}

The optimal acceleration of neural network inference is an urgent contemporary issue,
due to the growing complexity of networks and their reuse as terminal networks. 
Various different acceleration methods exist, 
such as training multiple networks (see e.g. \cite{openai2023gpt4_complete}, \cite{openai2023gpt4_complete}) or distillation (see e.g. \cite{sanh2020distilbert_complete}); 
for general surveys on neural network compression, the reader is e.g. referred to the comprehensive works of \cite{Choudhary_2020}, \cite{neill2020overview_complete}, 
\cite{Hoefler_2021}, \cite{MARINO2023152_new}, \cite{Vysogorets_2023}, as well as \cite{Bhalgaonkar_2024};
one major challenge is maintaining performance (see e.g. Table 1 in \cite{MARINO2023152_new}).\\
\indent
In the following, we show how our newly developed, general hybrid random concentrated optimization method
can be used for compressing some (here, classification-concerning) 
neural networks by pruning, 
to get significant weight reduction 
with minimal performance loss;
for the latter, our procedure serves as a complementary approach and has the advantage of providing 
convergence guarantees (cf. Section \ref{sec:properties}). \\
\indent
In order to achieve these goals, we define the function $f$ --- over weights-vectors $w$ ---
in connection with the pruning percentage at a given threshold, 
and specify the constraint set $\mathcal{S}$  in connection with the corresponding performance score. 
Accordingly, we take as initial input a weight-vector provided by an initial training.
More detailed,
for $K-$dimensional weights vectors $w \in \mathbb{R}^K$ and a desired pregiven 
threshold $t>0$ (which serves as a hyper-parameter), 
we define $T_t(w)$ as the pruning rate of $w$ at threshold $t$, that is
\begin{equation}
T_{t}(w) := \frac{1}{K} \cdot \sum_{k=1}^{K}{\mathbf{1}_{|w^{k}| < t}} \ ;
\label{fo:pruningrate}
\end{equation}
consistently, $T_{0}(w) := 0$ means \textit{no pruning}.
Similarly, for weights vectors $w \in \mathbb{R}^K$ 
we define $P_{t}(w)$ as the \textit{performance score ---
in terms of percentage of correctly classified instances --- after pruning at $t$},
where all weights below this threshold $t$ are set to $0$; consistently,
$P_{0}(w)$ quantifies the performance score when no pruning is done.
With this at hand, for fixed pre-chosen 
threshold $t>0$ and fixed pre-chosen minimal performance score $\eta \in (0,1)$
(where $\eta$ is typically larger than $0.85$) 
our objective is then to find the maximizer(s) of the optimization problem
\begin{eqnarray}
&& \max_{w\in\mathcal{S}} 
T_{t}(w)
\label{eq:maxpruningrate}
\\[-0.2cm]
&&
\textrm{with} \quad
\mathcal{S} := 
\mathcal{S}_{\eta,t} := \Big\{ w\in \mathbb{R}^K: \,  P_{t}(w) > \eta \Big\} \, ,
\nonumber
\end{eqnarray}
i.e. find the (not necessarily unique) weights vector which delivers the highest pruning rate (at $t$) while 
maintaining a performance score of higher than $\eta$.
Clearly, by choosing $f(w) := - T_{t}(w)$ this can be converted into our context of \textit{minimization}
of $f$ on the constraint set $\mathcal{S}$.


\subsection{Data set and implementation}
\label{subsec:cnndata}

We have carried out the goal formulated at the end of
the previous Subsection \ref{subsec:probspec}
by employing our above-described \textit{hybrid random concentrated optimization method}, 
on a concrete neural-network-based classification problem
for \href{https://www.kaggle.com/datasets/zalando-research/fashionmnist}{Fashion MNIST}.
The corresponding code used for the rest of the current Section \ref{sec:pruning} is publicly available at\\
 \texttt{https://github.com/p052/GS\_CS\_algo.git}.  

\vspace{0.2cm}
\noindent
As a \textit{first preliminary step}, on Fashion MNIST we trained --- in a simple and classic way --- 
a network with an employed network architecture  
described in Figure~\ref{fig:nn_mnist}.

\begin{figure}[ht]
\begin{center}
\includegraphics[scale=0.5]{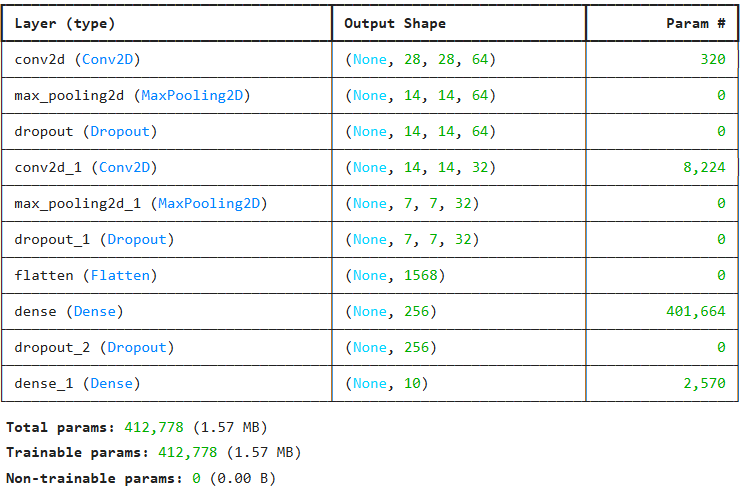}
\caption{Network architecture}
\label{fig:nn_mnist}
\end{center}
\end{figure}

\noindent
This network is (as common in image classification) divided into two parts: 
the first one being \textit{convolutional}, and the second one being \textit{dense}. 
As can be seen from Figure~\ref{fig:nn_mnist}, the first part has (relatively seen)
only a \textit{few}
parameters/weights, namely $8544$ (concatenated into a vector $w_{con}$);  on the other hand,
the second part contains the vast majority of the parameters/weights,
namely $404234$ (concatenated into a vector $w$).
All activation functions, \textit{except} for the last one --- 
which is a softmax leading to the probability of each of the ten classes --- are ReLU. 

\vspace{0.2cm}
\noindent
After training, we obtained as output a concrete vector $\widehat{w}_{tot} := (\widehat{w}_{con},\widehat{w})$
of total $412778$ weights which lead to a performance score of $P_{0}(\widehat{w}_{tot}) = 0.91$ 
on the test sample, i.e. $91 \%$ of the images were correctly classified. 
We saved these weights  $\widehat{w}_{tot}$, and froze its convolutional part $\widehat{w}_{con}$ 
for the rest of our analyses.
With its dense part $\widehat{w}$, in contrast, we employed $s_{0} := c_{0} := \widehat{w}$ as starting point
of our above-described new algorithm to solve the minimization problem
(with dimension $K:=404234$)

\begin{eqnarray}
&& \min_{w\in\mathcal{S}} \, 
(- T_{t}(w))
\label{eq:minpruningrate}
\\[-0.2cm]
&&
\textrm{with} \quad
\mathcal{S} := 
\mathcal{S}_{\eta_{0},t} := \Big\{ w\in \mathbb{R}^K: \,  P_{t}((\widehat{w}_{con},w)) > \eta \Big\} \, ;
\nonumber
\end{eqnarray}
notice that the required condition $s_{0} \in \mathcal{S}$ is satisfied here for reasonable choices
of $t >0$ and $\eta \in (0,1)$. For instance, for very small threshold $t$ one gets
$P_{t}((\widehat{w}_{con},\widehat{w})) \approx P_{0}((\widehat{w}_{con},\widehat{w}))$,
so that if $\eta$ is chosen to be close to and lower than $P_{0}((\widehat{w}_{con},\widehat{w}))$,
then $P_{t}((\widehat{w}_{con},s_{0})) = P_{t}((\widehat{w}_{con},\widehat{w})) > \eta$.
Accordingly, in the current concrete application we employed $t = 0.01$ and $\eta=0.87$.

\vspace{0.2cm}
\noindent
After this initialization, as a \textit{second step} we employed our hybrid random concentrated optimization method
(cf. Section \ref{sec:algo}) for solving the minimization problem \eqref{eq:minpruningrate},
with two different regime alternations (cf. Subsections \ref{ssec:GSpuisCS} and \ref{ssec:GSalternanceCS}).
As a side effect, one can see that our method can be comfortably used
even if the constraint set $\mathcal{S}$ is given only ``implicitly''
and thus its detailed shape may also be known only ``implicitly''
(after all, the crucial performance scores can be straightforwardly 
computed after each simulation step).

\begin{remark}[Pruning vs. performance]
Let us note here that (similarly to Remark~\ref{rem:versus}) 
the pruning rate can be improved as long as performance is strictly greater than the limit defining $\mathcal{S}$.
\end{remark}

\begin{remark}\label{rem:convexity}[non-convexity of $\mathcal{S}$]
The dependency of weights between layers is significant, leading to the constraint set $\mathcal{S}$ being non-convex: 
if two weights vectors $w_{1}$ and $w_{2}$ are both in $\mathcal{S}$, then it generally does not 
follow that their average $\frac{w_{1}+ w_{2}}{2}$ is also in $\mathcal{S}$.
Indeed, the average between two weights vectors that each provide good performance score is expected to lead to, 
at best, average performance score.
For instance, we obtained by training twice (empirically but reproducibly, see the above-mentioned code link)
two different weights vectors with performance scores $P_{0}((\widehat{w}_{con},w_{1})) \approx 0.90 > 0.87 = \eta$
and $P_{0}((\widehat{w}_{con},w_{2})) \approx 0.90 > 0.87 = \eta$ such that
$P_{0}((\widehat{w}_{con},\frac{w_{1}+ w_{2}}{2})) \approx 0.70 < 0.87 = \eta$.
Additionally, we can show that $T_t$ is not concave ($f=-T_t$ not convex). If $w_1$ and $w_2$ are equal except on component $1$ with $w_1^1 = 2t+1$ and $w_2^1=0$ we would have: 
\begin{equation*}
T_t(\frac{w_1+w_2}{2}) = T_t(w_1) < \frac{T_t(w_1) + T_t(w_2)}{2}
\end{equation*}
since $T_t(w_2) = T_t(w_1) + \frac{1}{K}$.
\end{remark}


\subsection{Results\label{ssec:results}}


\subsubsection{GS followed by CS\label{ssec:GSpuisCS}}

As indicated in the general presentation of Subsection
\ref{ssec:alternate_idea}, as a first regime-alternation variant of our method 
we iteratively generated new weights (solution candidates) according to GS and then --- 
after the \textit{incumbent candidate minimum} was considered to be stable because the step size became too large 
--- we switched to CS iterations. 
In this first algorithm, the application of the GS regime leads to the behaviour displayed in Figure \ref{fig:GS}.
One can see that in the left-hand display the pruning rate increases sharply, rising from $10\%$ to about $30\%$. The performance score in the right-hand display, however, drops quickly from $91\%$ to $88\%$ which is very close to the pregiven minimum threshold $\eta = 87\%$;  this indicates that the outcoming incumbent candidate solution is near the boundary $\partial\mathcal{S}$ of $\mathcal{S}$. 
Unfortunately, the GS regime soon reaches its limit as the last $58$ iterations result in no increase in $T_t(\cdot)$.

\begin{figure}[ht]
    \begin{minipage}[c]{.46\linewidth}
        \centering
        \includegraphics[scale=0.4]{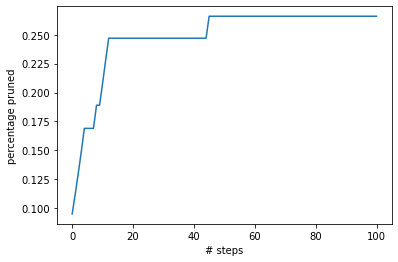}
    \end{minipage}
    \hfill
    \begin{minipage}[c]{.46\linewidth}
        \centering
		\includegraphics[scale=0.4]{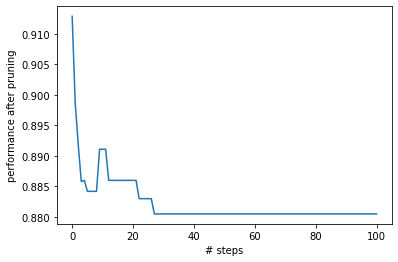}
    \end{minipage}
    \caption{for GS only:  Evolution of $T_{t}(\cdot)$ 
		(on the left) and $P_{t}((\widehat{w}_{con},\cdot))$ (on the right) over the first 100 iterations}
	\label{fig:GS}
\end{figure}

\vspace{0.2cm}
\noindent
Being stably near the boundary $\partial\mathcal{S}$ after $m=100$ GS iterations,
we then switched regimes in the second phase 
by applying \textit{only} CS, 
the center being the last incumbent candidate solution found by GS. 
The corresponding overall evolution (from step $m=0$ to step $m=1000$) 
of the pruning rate and the performance score
is shown in the Figure~\ref{fig:GSpuisCS}:
the first phase represented by the GS regime (cf. also Figure~\ref{fig:GS}) 
leads to a weights vector close to the boundary, whereas 
the second phase, in the CS regime, allows for a
much slower but significant increase in the pruning rate $T_t(\cdot)$, reaching about $40\%$ pruning 
while maintaining substantially the same performance score. 
A clear difference in pruning-rate update is observed between the two regimes: 
GS makes substantial upward steps, while CS increases slowly but frequently.

\begin{figure}[ht]
    \begin{minipage}[c]{.46\linewidth}
        \centering
        \includegraphics[scale=0.4]{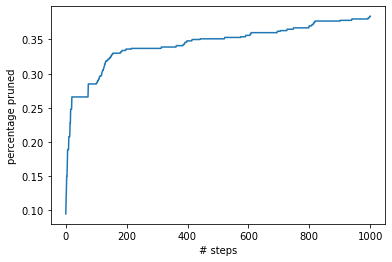}
    \end{minipage}
    \hfill
    \begin{minipage}[c]{.46\linewidth}
        \centering
		\includegraphics[scale=0.4]{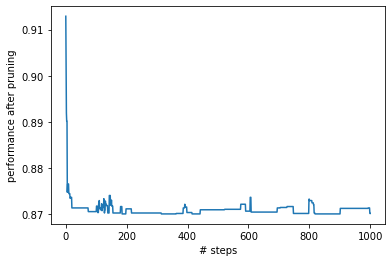}
    \end{minipage}
    \caption{Evolution of $T_{t}(\cdot)$ (on the left) and $P_{t}((\widehat{w}_{con},\cdot))$ (on the right) over 1000 iterations, 
		with GS for steps $0-100$ and CS for steps $101-1000$.}
		\label{fig:GSpuisCS}
\end{figure}


\subsubsection{Permanent alternation between GS and CS \label{ssec:GSalternanceCS}}

On the same dataset, the same hyperparameter-settings $t = 0.01$, $\eta=0.87$
and the same starting point $s_{0} := c_{0} := \widehat{w}$ as 
in Subsection \ref{ssec:GSpuisCS}, we also applied the
\textit{second} regime-change variant of our method (cf. Subsection \ref{ssec:alternate_idea}):
iteratively generating new weights (solution candidates) by \textit{permanently}
alternating between the regimes GS (even steps) and CS (odd steps).
This procedure offers several advantages: Proposition~\ref{prop:parcours} applies to the GS regime; Lemma~\ref{lem:voisinage} and Proposition~\ref{prop:concentration} apply to the CS regime; and both regimes continue in parallel throughout the process, eliminating one ``parameter'': the timing of the regime-switching step.

\vspace{0.2cm}
\noindent
Figure \ref{fig:alternance_GS_CS} shows the corresponding overall evolution (from step $m=0$ to step $m=1000$) 
of the pruning rate (on the left-hand side) and the performance score (on the right-hand side),
in case of the above-described permanent regime-alternation.
Notice that Figure \ref{fig:alternance_GS_CS} shows a similar behaviour
to Figure~\ref{fig:GSpuisCS}: the evolution of the pruning rate and the performance score 
is mainly essentially comparable. Ultimately, the ``first-GS-then-CS'' regime-switching approach 
of Subsubsection \ref{ssec:GSpuisCS}
found a network where $38.4\%$ of the dense-part-concerning weights can be pruned (recall that the convolutional-part-concerning weights remained unchanged/fixed), achieving an overall performance score of $87\%$, 
while the permanent-alternation approach of the current subsection
resulted in a network where $35.7\%$ of the dense-part-concerning weights can be pruned,
leading to an identical overall performance score of $87\%$.

\begin{figure}[ht]
    \begin{minipage}[c]{.46\linewidth}
        \centering
        \includegraphics[scale=0.4]{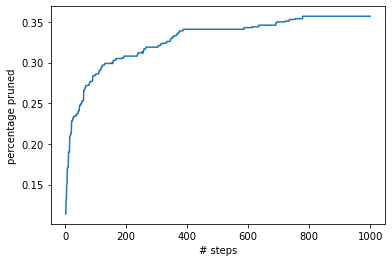}
    \end{minipage}
    \hfill
    \begin{minipage}[c]{.46\linewidth}
        \centering
        \includegraphics[scale=0.4]{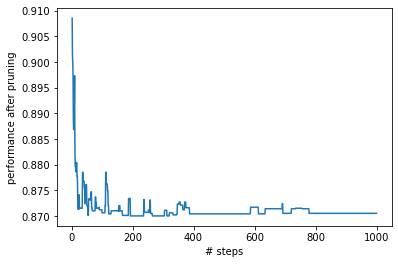}
    \end{minipage}
	\caption{Evolution of $T_{t}(\cdot)$ (on the left) 
	         and $P_{t}((\widehat{w}_{con},\cdot))$ (on the right) 
	         over $1000$ iterations, in the case of permanent alternation}
    \label{fig:alternance_GS_CS}
\end{figure}

\vspace{0.2cm}
\noindent
As a consequence, the above concrete results support the permanently alternating approach 
presented in Subsection~\ref{ssec:alternate_idea} and the current subsection, 
since the performance score is close to the other regime-change variant 
(recall that the method is based on random generations) 
and the permanent alternation frees us from a technical parameter 
and a complex question: identifying proximity to the boundary.


\section{Conclusion}

In this paper, we develop a new random method to minimize deterministic continuous functions over constraint sets 
$\mathcal{S}$ in high-dimensional space $\mathbb{R}^K$, where no convexity assumptions are required. 
Our approach employs both a Global Search (GS) regime as well as a Concentrated Search (CS) regime,
with different possible strategies to switch between them.
Basically, the role of GS is ``vast'' candidate-solution-exploration of the entire space 
whereas the role of CS amounts to a ``locally refined'' candidate-solution-search which effectively takes into
account high-dimensional concentration properties of the underlying random simulations 
(which, amongst other things, turns out to be particularly useful in 
finding improved candidate solutions near the boundary of $\mathcal{S}$);
these concentration properties are rigorously stated and proved.
In parallel, we also show that $GS$ reaches any point in $\mathcal{S}$ in finite time
(in case that $\mathcal{S}$ satisfies the ``light'' assumption of path-connectedness).

\vspace{0.2cm}
\noindent
As demonstrations of the effectiveness of our new method, we work out two concrete applications:
the first one optimizes the reduction of the ``sparsity-quantifying'' $\ell_{1}-$norm 
of the LASSO minimizer while keeping ``almost-equal'' performance quality, 
whereas the second one deals with the compression
of a MNIST-fashion-classification concerning neural network through optimization of the corresponding 
pruning percentage at a given threshold while maintaining (at least) a required 
substantially high performance-score.
In the course of this, we additionally emphasize the advantage that our approach (in contrast to some other methods)
provides the above-mentioned convergence guarantees.

\vspace{0.2cm}
\noindent
In future work, it would be interesting 
to extend our approach to other neural network architectures 
and to more complex application fields, particularly to computer vision or natural language processing tasks. 
Moreover, the additional integration of other compression techniques such as weight quantization or sparsified neural networks could further enhance efficiency gains. All these issues seem to 
principally tractable. 

\vspace{0.2cm}
\noindent
Summing up things, our new method provides a framework for generic high-dimensional constrained optimization 
even under light assumptions on the objective function $f$ and the constraint set $\mathcal{S}$.


\section*{Acknowledgements}  
W. Stummer is grateful to the Sorbonne Universit\'{e} Paris for its multiple partial financial support and especially to the LPSM for its multiple great hospitality. 
M. Broniatowski thanks very much the FAU Erlangen-N{\"u}rnberg for its partial financial support and hospitality.


\bibliographystyle{splncs04}
\bibliography{Biblio}

\end{document}